\numberwithin{equation}{section} 
\numberwithin{figure}{section} 
  \theoremstyle{plain}
  \newtheorem{thm}{Theorem}[section]
  \theoremstyle{plain}
  \theoremstyle{plain}
  \theoremstyle{remark}
  \newtheorem{rem}[thm]{Remark}
  \theoremstyle{remark}
  \theoremstyle{plain}
  \newtheorem{mydef}{Definition}
\def\<{{\langle }}
\def\>{{\rangle }}
\def\ket#1{|#1\rangle}
\def\<{{\langle }}
\def\>{{\rangle }}
\begin{document}

\title[Schmidt representation Real 3-qubits]{Schmidt representation of 3-qubits with real amplitudes}

\author{Oscar Perdomo}
\date{\today}
\begin{abstract} 
From the Schmidt representation we have that, up to local gates, every 2-qubit  state can be written as $\ket{\phi}=\lambda_1 \ket{00}+\lambda_2 \ket{11}$ with $\lambda_1$ and $\lambda_2$ real numbers. For 3-qubits states, it is known, \cite{AA}, that up to local gates every 3-qubit  state can be written as  $\ket{\phi}=\lambda_1 \ket{000}+\lambda_2 e^{i \theta }\ket{100}+\lambda_3\ket{101}+\lambda_4\ket{110}+\lambda_5\ket{111}$ with $\lambda_i\ge0$ and $0\le \theta \le \pi$. In this paper, we show that no every 3-qubit state with real amplitudes can be transform in the form  $\ket{\phi}=\lambda_1 \ket{000}+\lambda_2 \ket{100}+\lambda_3\ket{101}+\lambda_4\ket{110}+\lambda_5\ket{111}$ by using local gates in the orthogonal group (the group generated by $R_y(\theta)$ and $X$ gates). We also show that, up to local gates in the orthogonal group, every 3-qubit with real amplitudes can be written as $\ket{\phi}=\lambda_1 \ket{000}+\lambda_2 \ket{011}+\lambda_3\ket{101}+\lambda_4\ket{110}+\lambda_5\ket{111}$ with the $\lambda_i$ real numbers. An explanation of this result can be found in the youtube video \url{https://youtu.be/gDN20QHzsoQ}

\end{abstract}
\maketitle

\section{introduction}

Two pure $n$-qubit states $\ket{\phi_1}$ and $\ket{\phi_2}$ define the same entanglement if there is a local gate in the unitary group, a matrix $U=U_1\otimes\dots\otimes U_n$ with each $U_i$ a 2 by 2 unitary matrix, such that $\ket{\phi_2}=U\ket{\phi_1}$. This relation between $n$-qubits is an equivalent relation and the quotient space is denoted as $\mathcal{E}_n$ and it is called the space of entanglements types, \cite{W}. In this space, we think of all the $n$-qubit states that are connected with local gates as a single element. We have another interpretation for this partition of the set of $n$-qubit states:  Two $n$ qubit states $\ket{\phi_1}$ and $\ket{\phi_2}$ define the same entanglement if and only if they are LOCC equivalent, this is, if not only $\ket{\phi_2}$  transform into $\ket{\phi_1}$ using local operation and classical communications (LOCC) but also $\ket{\phi_1}$  transform into $\ket{\phi_2}$ using LOCC, \cite{B}.

Even though entanglement is an important aspect of Quantum information and Computation \cite{N}, we could say that it is only completely understood for 2-qubit states. For example, it is not known if any pair of 3-qubit states can be connected with a circuit  made of local gates and at most three $CZ$ gates. This is, we do not know if we can move from one entanglement type to another using three $CZ$ gates of less. It is known that we can do this with four $CZ$ gates. On the other hand it is known that every 3-qubit state can be reached from the state $\ket{000}$ by means of local gates and at most 3 $CZ$ gates, \cite{Z}.

There is not doubt that to deal with real numbers and orthogonal matrices is easier (at least, computationally less expensive) than to deal with complex numbers and unitary matrices. We need only one real parameter to describe an orthogonal 2 by 2 matrix, while we need 4 real  parameters (three if we take unitary matrices with determinant one) to describe  a 2 by 2 unitary matrix. Let us say that 

$$\ket{\phi}=a_0 \ket{0\dots 00}+a_1\ket{0\dots 01}+\dots+a_{2^n-1} \ket{1\dots 1} $$

is a {\it real} $n$ qubit state if all the $a_i$ are real numbers. These states have been considered earlier as part of Real-Vector space Quantum Theory \cite{HW}. We say that two real $n$-qubit states $\ket{\phi_1}$ and $\ket{\phi_2}$ define the same real entanglement if there is a local gate in the orthogonal group, a matrix $U=U_1\otimes\dots\otimes U_n$ with each $U_i$ a 2 by 2 orthogonal matrix, such that $\ket{\phi_2}=U\ket{\phi_1}$. This relation between real $n$-qubits is an equivalent relation and the quotient space is denoted as $\mathcal{E}^{\mathbb{R}}_n$. There is a natural map $\chi_n:\mathcal{E}^{\mathbb{R}}_n\longrightarrow \mathcal{E}_n$ that sends the real entanglement of the real $n$-qubit $\ket{\phi}$ into the entanglement of $\ket{\phi}$ viewed as a regular pure state with complex amplitudes. For the case of 2-qubits, the map $\chi_2$ is one to one and onto, this is, we do not miss any entanglement by considering only real 2-qubit states and by considering only orthogonal gates, \cite{P}. This can be useful, for example, if we are training a quantum circuit (\cite{B1}, \cite{L}, \cite{PG})  to achieve a particular entanglement, there is not need to consider unitary matrices, it is enough to consider real 2-qubits states and $R_y$ gates.  For 3-qubits, the map $\chi_3:\mathcal{E}^{\mathbb{R}}_3\longrightarrow \mathcal{E}_3$ is not onto because we need five real parameters to describe  $\mathcal{E}_3$ (\cite{SM}, \cite{AA}, \cite{W}), while we only need four to describe $ \mathcal{E}^{\mathbb{R}}_3$. Moreover, $\chi_3$ is not one to one, since we can show that the GHZ state $\frac{1}{\sqrt{2}}(\ket{000}+\ket{111})$ and the state $\frac{1}{2}(\ket{001}-\ket{010}+\ket{100}+\ket{111})$ have different real entanglement (there is not local orthogonal gates connecting them) and they have the same entanglement because they can be connected with a unitary local gate, \cite{P1}.

One of the goals of this paper is to show that for any real state $\ket{\phi_1}$ there is a real state $\ket{\phi_2}=R_y(\theta_2)\otimes R_y(\theta_1)\otimes R_y(\theta_0)\ket{\phi_1}$ of the form $\lambda_1 \ket{000} +\lambda_2 \ket{011}+\lambda_3\ket{101}+\lambda_4\ket{110}+\lambda_5\ket{111}$ with $\sum\lambda_i^2=1$. Recall that 

$$R_y(\theta)=\left(
\begin{array}{cc}
 \cos \left(\frac{\theta}{2}\right) & -\sin \left(\frac{\theta}{2}\right) \\
 \sin \left(\frac{\theta}{2}\right) & \cos \left(\frac{\theta}{2}\right) \\
\end{array}
\right)\, .$$ 

We would like to point out that if we try to follow the lines presented in \cite{AA} to try to eliminate the coefficients of $\ket{001},\ket{010},\ket{011}$ we will fail due to the fact that the real numbers are not algebraically complete and not every quadratic equation has a solution in the real numbers.  This is not only a proof technicality, since we can show that it is impossible to use local orthogonal gates to transform the state $\xi=\frac{1}{\sqrt{2}}\left(\ket{000}+\ket{001}+\ket{011}+\ket{101}-\ket{110}\right)$ into a state of the form $\lambda_1 \ket{000}+\lambda_2\ket{100}+\lambda_3\ket{101}+\lambda_4\ket{110}+\lambda_5\ket{111}$ with $\lambda_i$ real numbers. We have that the Schmidt representation (the one presented in \cite{AA} ) of the state $\xi$  is 

$$\frac{1}{\sqrt{10}} \left(2\ket{000}-i\ket{100}+\ket{101}+\ket{110}+\sqrt{3} \ket{111} \right)$$

Therefore, in order to find a Schmidt representation for real 3-qubits, the first task to eliminate three coefficients among real 3-qubit states is to decide which coefficients can be transform into zero. It turns out that we can easily make the coefficients of $\ket{001}$ and $\ket{100}$ vanish, see Theorem \ref{thm1}. Therefore we have a representation of every real 3-qubit state in the 5 dimensional sphere $S_0^5 $ of qubits $\sum u_{rst} \ket{rst}$ with $u_{010}=u_{100}=0$. In order to prove that we can make an additional coefficient equal to zero we need to learn how to navigate this 5 dimensional sphere. We define a vector field $X$ in $S_0^5 $ that shows us the right direction to  move in $S_0^5 $ in such a way that we only touch states that are connected with local orthogonal gates. In a little more precise words, we can visualize the integral curves of the vector field $X$ as bridges form by states that define the same real entanglement, see Theorem \ref{X}. We manage to show that we can make the coefficients $u_{001}=u_{010}=u_{100}=0$ by studying these bridges, this is, by studying the integral curves of vector field $X$. The author would like to thank Jos\'e Ignacio Latorre for sharing his knowledge.


\section{Representation in a 5 dimensional space}

In this section we show that, up to local gates in the orthogonal group, we can write every real 3-qubit state as
\begin{eqnarray}\label{sr1}
\ket{\phi}=x_1 \ket{000}+x_2 \ket{010} +x_3 \ket{011}+x_4\ket{101}+x_5\ket{110}+x_6\ket{111}
\end{eqnarray}
More precisely, we have,

\begin{thm}\label{thm1} Let $\ket{\phi}=u_0 \ket{000}+u_1 \ket{001}+u_2 \ket{010}+u_3 \ket{011}+u_4 \ket{100}+u_5 \ket{101}+u_6 \ket{110}+u_7 \ket{111}$, with $u_i$ real numbers. There exist $\theta_0$  and $\theta_2$ such that 

$$R_y(\theta_2)\otimes R_y(0)\otimes R_y(\theta_0) \ket{\phi}=x_1 \ket{000}+x_2 \ket{010} +x_3 \ket{011}+x_4\ket{101}+x_5\ket{110}+x_6\ket{111}$$

\end{thm}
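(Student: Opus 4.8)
The plan is to exploit that the middle qubit is fixed by $R_y(0)=I$, so only the first and third qubits are rotated, and the coefficients of $\ket{001}$ and $\ket{100}$ both live in the ``middle-qubit-$0$'' sector. Collecting the four amplitudes with middle qubit $0$ into the real $2\times 2$ matrix
\[
M=\begin{pmatrix} u_0 & u_1\\ u_4 & u_5\end{pmatrix},
\]
indexed by the first and third qubits, the operator $R_y(\theta_2)\otimes R_y(0)\otimes R_y(\theta_0)$ sends this sector to $R_y(\theta_2)\,M\,R_y(\theta_0)^{T}$, while the ``middle-qubit-$1$'' amplitudes $u_2,u_3,u_6,u_7$ are moved by the same rotations but remain unconstrained in the target form. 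The coefficients of $\ket{001}$ and $\ket{100}$ are exactly the two off-diagonal entries of $R_y(\theta_2)\,M\,R_y(\theta_0)^{T}$, so the whole statement reduces to the real linear-algebra fact that any $2\times 2$ real matrix can be diagonalized by a rotation on the left and a rotation on the right.

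Concretely, I would write $s=\sin\frac{\theta_0}{2}$, $c=\cos\frac{\theta_0}{2}$, $S=\sin\frac{\theta_2}{2}$, $C=\cos\frac{\theta_2}{2}$ and compute the two off-diagonal entries. Setting them to zero gives the homogeneous system
\begin{align*}
C\,(s u_0+c u_1)-S\,(s u_4+c u_5)&=0,\\
S\,(c u_0-s u_1)+C\,(c u_4-s u_5)&=0,
\end{align*}
which I would read as a linear system in the unit vector $(C,S)$. A nonzero (hence, after normalization, genuine) solution $(C,S)$ exists precisely when the $2\times 2$ coefficient determinant vanishes, and a short expansion with the double-angle identities $2sc=\sin\theta_0$ and $c^2-s^2=\cos\theta_0$ shows this determinant equals
\[
\tfrac12\sin\theta_0\,\big[(u_0^2-u_1^2)+(u_4^2-u_5^2)\big]+\cos\theta_0\,\big[u_0u_1+u_4u_5\big].
\]

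The key observation, and the reason the real case still works here (unlike the attempt to clear $\ket{001},\ket{010},\ket{011}$ described in the introduction), is that this determinant has the form $A\sin\theta_0+B\cos\theta_0$, and such an equation always possesses a real root $\theta_0$ — no quadratic needs to be solved, so the algebraic non-closedness of $\mathbb{R}$ never intervenes. This is the step I expect to be the crux: the whole difficulty is in arranging that the consistency condition be \emph{linear} in $(\sin\theta_0,\cos\theta_0)$ rather than quadratic. Once such a $\theta_0$ is fixed the coefficient matrix is singular, so I take $(C,S)$ to be a unit vector in its kernel and recover $\theta_2$. The remaining care is only for the degenerate cases — when $A=B=0$ any $\theta_0$ works, and when the kernel is two-dimensional any $\theta_2$ does — which I would dispatch by noting that a singular $2\times 2$ matrix always has a nonzero kernel vector. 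Equivalently, one may invoke the spectral theorem for the symmetric matrix $M^{T}M$ to produce the rotation $R_y(\theta_0)$ directly, since a real symmetric $2\times 2$ matrix always admits an orthonormal eigenbasis; this is the conceptual source of the single trigonometric equation above.
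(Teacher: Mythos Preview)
Your argument is correct and coincides with the paper's own proof: the paper also reduces to the single equation $a_1\sin\theta_0+a_2\cos\theta_0=0$ with $a_1=-u_0^2+u_1^2-u_4^2+u_5^2$ and $a_2=-2(u_0u_1+u_4u_5)$ (your determinant condition up to an overall factor), and then solves $b_1\sin\tfrac{\theta_2}{2}+b_2\cos\tfrac{\theta_2}{2}=0$ with $b_1=-(su_4+cu_5)$, $b_2=su_0+cu_1$, which is exactly your first kernel equation. Your framing via the $2\times2$ block $M$ and its singular-value decomposition is a helpful conceptual gloss and also makes the degenerate cases ($b_1=b_2=0$) cleaner than the paper's terse ``direct computation shows'' treatment, but the underlying route is the same.
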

\begin{proof} A direct computation shows that if $\theta_0$ satisfies the equation $a_1 \sin (\theta_0)+a_2 \cos (\theta_0)=0$ with $a_1=\left(-u_0^2+u_1^2-u_4^2+u_5^2\right)$ and $a_2=-2 (u_0 u_1+u_4 u_5)$, and  $\theta_2$ satisfies the equation $b_1 \sin (\frac{\theta_2}{2})+b_2 \cos (\frac{\theta_2}{2})=0$ with 

$$b_1= -u_4 \sin \left(\frac{\theta_0}{2}\right)-u_5 \cos \left(\frac{\theta_0}{2}\right)\quad\hbox{and}\quad b_2= u_1 \cos \left(\frac{\theta_0}{2}\right) +u_0 \sin \left(\frac{\theta_0}{2}\right) $$

then, $R_y(\theta_2)\otimes R_y(0)\otimes R_y(\theta_0) \ket{\phi}$ can be written as $x_1 \ket{000}+x_2 \ket{010} +x_3 \ket{011}+x_4\ket{101}+x_5\ket{110}+x_6\ket{111}$ for some real numbers $x_1\dots x_6$.

\end{proof}


The theorem above tell us that every  possible entanglement create by a real 3-qubit states is realized by a state in the 5-dimensional sphere $S_0^5$ defines as follow

\begin{mydef} We define 

$$S_0^5= \{x_1 \ket{000}+x_2 \ket{010} +x_3 \ket{011}+x_4\ket{101}+x_5\ket{110}+x_6\ket{111}: \,  \, x_1^2+\dots + x_6^2=1\,\}$$

\end{mydef}

Counting dimensions we have that it is expected that, in general, for any 3-qubit state in $\ket{\phi_0}\in S_0^5$, there is a curve of states in $S_0^5$ that can be reached by $\ket{\phi_0}$ using orthogonal local gates. To see the previous observation we notice that  all the real states are described by points in the $7$-dimensional sphere of unit vectors in $\mathbb{R}^8$. Since the lie group $O(2)$ of orthogonal 2 by 2 matrices is 1 dimensional, then, the space of local orthogonal gates $O(2)\otimes O(2)\otimes O(2)$ is a three dimensional manifold. With this in mind it is expected that, up to local orthogonal gates, the space of 3-qubit states with real amplitudes is described with $7-3=4$ parameters.

\begin{rem}\label{mm} If we identify  $\ket{\phi}=u_0 \ket{000}+u_1 \ket{001}+u_2 \ket{010}+u_3 \ket{011}+u_4 \ket{100}+u_5 \ket{101}+u_6 \ket{110}+u_7 \ket{111}$ with the unit vector in $(u_0,\dots, u_7)\in \mathbb{R}^8$, then  $S_0^5$ is identified with the manifold 
\begin{eqnarray}
M=\{ (x_1,0,x_2,x_3,0,x_4,x_5,x_6)\in \mathbb{R}^8\,:\, x_1^2+\dots +x_6^2=1\,\}
\end{eqnarray}
\end{rem}

The following tangent vector field on $S_0^5$ has the property that each one of its integral curves consists of states connected by local orthogonal gates.

\begin{thm} \label{X}  In this theorem we are using the identification given in Remark \ref{mm}.
The vector field $X=(X_1,0,X_2,X_3,0,X_4,X_5,X_6)$ with 
\begin{eqnarray*}
X_1&=& x_2 \left(x_1^2-x_4^2\right),\quad X_2=-x_1^3+\left(x_3^2+x_4^2+x_5^2\right) x_1+2 x_3 x_4 x_5\\
X_3 &=& (x_3 x_4+x_1 x_5) x_6-x_2 (x_1 x_3+x_4 x_5),\quad X_4=\left(x_1^2-x_4^2\right) x_6,\\ X_5&=& (x_1 x_3+x_4 x_5) x_6-x_2 (x_3 x_4+x_1 x_5), \, X_6=2 x_4^3+\left(x_2^2+x_6^2-1\right) x_4-2 x_1 x_3 x_5
\end{eqnarray*}

defines a tangent vector field in $M$. Moreover, every solution 

$$\ket{\phi(t)}=x_1(t) \ket{000}+x_2(t) \ket{010} +x_3(t) \ket{011}(t)+x_4(t)\ket{101}+x_5(t)\ket{110}+x_6(t)\ket{111}$$

 of the initial value problem 

$$\ket{\phi(t)}^\prime=X(\ket{\phi(t)}) \hbox{, with } \ket{\phi}(0)=\ket{\phi_0}$$

 satisfies that any pair of 3-qubit states $\ket{\phi(t_1)} $ and  $\ket{\phi(t_2)} $ can be transform into each other using a local orthogonal gate. Additionally, we have that 
 
 $$\ket{\phi(t)}=R_y(\theta_2(t))\otimes R_y(\theta_1(t))\otimes R_y(\theta_0(t)) \ket{\phi}$$
 
 with 
 
 \begin{eqnarray*}
\theta_0(t)&=& \int_0^tL_0\left(\ket{\phi}(\tau)\right)d\tau \hbox{ where } L_0=-2 (x_1 x_3+x_4 x_5)\\
\theta_1(t)&=& \int_0^tL_1\left(\ket{\phi}(\tau)\right)d\tau \hbox{ where } L_1=2 (x_4^2-x_1^2)\\
\theta_2(t)&=& \int_0^tL_2\left(\ket{\phi}(\tau)\right)d\tau \hbox{ where } L_2=-2 (x_3 x_4+x_1 x_5)
\end{eqnarray*}
\end{thm}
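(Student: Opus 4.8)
\emph{Proof proposal.} The plan is to recognize $X$ as the instantaneous velocity of the local orthogonal action, in which the three functions $L_0,L_1,L_2$ serve as the angular speeds of the three $R_y$ rotations. The whole theorem then reduces to one algebraic identity together with a standard uniqueness argument for ordinary differential equations.

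First I would write down the single generator of the one-parameter group $R_y$,
\[
G=\frac{d}{d\theta}R_y(\theta)\Big|_{\theta=0}=\frac12\begin{pmatrix}0&-1\\1&0\end{pmatrix},
\]
so that $\frac{d}{d\theta}R_y(\theta)=\frac12\,R_y(\theta)\,G=\frac12\,G\,R_y(\theta)$, the two orders agreeing because $G$ and $R_y(\theta)$ commute (both lie in $SO(2)$). Setting $A_2=\frac12\,G\otimes I\otimes I$, $A_1=\frac12\,I\otimes G\otimes I$ and $A_0=\frac12\,I\otimes I\otimes G$, the tangent space at $\ket{\phi}$ to the orbit of the local orthogonal group is spanned by $A_2\ket{\phi},A_1\ket{\phi},A_0\ket{\phi}$. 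The computational heart of the proof is then the single identity
\[
X(\ket{\phi})=L_2\,A_2\ket{\phi}+L_1\,A_1\ket{\phi}+L_0\,A_0\ket{\phi},
\]
which I would verify by expanding both sides in the computational basis and matching all eight coordinates. Six of the resulting scalar equations reproduce the stated formulas for $X_1,\dots,X_6$, while the coordinates of $\ket{001}$ and $\ket{100}$ on the right must come out to be zero. This last point is exactly the reason the functions $L_0,L_1,L_2$ were chosen as they are, and it is where the main effort lies: one has to confirm that these three quadratics are precisely the combination that annihilates those two coordinates.

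Granting the identity, tangency of $X$ to $M$ is immediate. The vanishing of the $\ket{001}$ and $\ket{100}$ coordinates keeps $X$ inside the linear subspace that defines $M$, and since $G$ is skew-symmetric each $A_j$ is skew-symmetric, so $\langle A_j\ket{\phi},\ket{\phi}\rangle=0$ and hence $\langle X(\ket{\phi}),\ket{\phi}\rangle=0$; thus $X$ is also tangent to the unit sphere, and therefore to $M$.

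Finally, to obtain the explicit gate factorization and the local-equivalence statement I would run a moving-frame argument. Let $\ket{\phi(t)}$ solve $\ket{\phi(t)}'=X(\ket{\phi(t)})$, put $\theta_j(t)=\int_0^t L_j(\ket{\phi(\tau)})\,d\tau$, $h(t)=R_y(\theta_2(t))\otimes R_y(\theta_1(t))\otimes R_y(\theta_0(t))$, and consider $\ket{\eta(t)}=h(t)^{-1}\ket{\phi(t)}$, noting $h(0)=I$. Differentiating $h$ and using that $G$ commutes with $R_y$ in each slot gives $h^{-1}h'=L_2A_2+L_1A_1+L_0A_0$, with the $L_j$ evaluated along $\ket{\phi(t)}$; the identity above gives $X(\ket{\phi(t)})=(L_2A_2+L_1A_1+L_0A_0)\ket{\phi(t)}$ with the very same coefficients. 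Since these operators commute with $h$, the two terms in $\frac{d}{dt}\ket{\eta(t)}$ cancel, so $\ket{\eta(t)}$ is constant and equal to $\ket{\phi(0)}$. Hence $\ket{\phi(t)}=h(t)\ket{\phi(0)}$, which is the claimed factorization, and $\ket{\phi(t_2)}=h(t_2)h(t_1)^{-1}\ket{\phi(t_1)}$ exhibits an explicit local orthogonal gate carrying one point of the integral curve to any other. The only genuine obstacle is the brute-force verification of the polynomial identity — in particular checking that $L_0,L_1,L_2$ force the $\ket{001}$ and $\ket{100}$ coordinates of the right-hand side to vanish; once that is done, everything else is formal.
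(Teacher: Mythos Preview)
Your approach is essentially the same as the paper's: both reduce everything to the identity $X(\ket{\phi})=\sum_j L_j\,\partial_{\theta_j}\psi$ (your $A_j\ket{\phi}$ are exactly the $\partial_{\theta_j}\psi$ at the identity), defer its verification to direct computation, and read off tangency and the angle formulas from it. Your moving-frame/ODE-uniqueness argument is a more explicit version of the paper's one-line claim that the integral curve lies in $\Sigma_{\ket{\phi}}\cap M$; your tangency argument via skew-symmetry of $A_j$ is a slightly cleaner packaging of the paper's perpendicularity check. One cosmetic slip: since you already set $G=R_y'(0)=\tfrac12\bigl(\begin{smallmatrix}0&-1\\1&0\end{smallmatrix}\bigr)$, you should have $R_y'(\theta)=R_y(\theta)G$ and $A_j=G$ in the $j$th slot (no extra $\tfrac12$); as written your constants are off by a factor of $2$, but this does not affect the argument.
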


\begin{proof} The fact that $X$ is tangent to  $M$ follows by verifying that the vector $X$ is perpendicular to the vector $(x_1,0,x_2,x_3,0,x_4,x_5,x_6)$. Recall that a vector $v\in\mathbb{R}^8$ is tangent to the sphere $M$ at the point $p$ if and only the vector $v$ has  its second and fifth entry equal to zero and it is perpendicular to the vector $p$. To proof the rest of the theorem, for any $\ket{\phi}\in M$, we define the manifold

$$\Sigma_{\ket{\phi}}=\{\psi(\theta_2,\theta_1,\theta_0)=R_y(\theta_2)\otimes R_y(\theta_1)\otimes R_y(\theta_0) \ket{\phi}: \theta_2,\theta_1,\theta_0\in\mathbb{R}\}\subset \mathbb{R}^8$$

A direct verification shows that for any $\ket{\xi}\in \Sigma_{\ket{\phi}}$, the vector field  $X$ is a linear combination of the vectors $\frac{\partial \psi}{\partial \theta_2}(\ket{\xi})$, $\frac{\partial \psi}{\partial \theta_1}(\ket{\xi})$ and $\frac{\partial \psi}{\partial \theta_0}(\ket{\xi})$. Therefore the integral curve of the vector field $X$ containing the state $\ket{\phi}$ is part of the intersection of the manifolds $\Sigma_{\ket{\phi} }$ and $M$. Hence, any pair of 3-qubit states $\ket{\phi(t_1)} $ and $\ket{\phi(t_2)} $ in an integral curve of $X$ are connected by a local orthogonal gate. The formula for the angles follows from the fact that $X=L_0\frac{\partial \psi}{\partial \theta_0}+L_1\frac{\partial \psi}{\partial \theta_1}+L_2\frac{\partial \psi}{\partial \theta_2}$.
\end{proof}

The following functions are first integrals for the vector field $X$. 

\begin{thm} The vector field $X$ defined on $M$ has the following first integrals

\begin{eqnarray*} I_2&=&2 x_3^4+2 \left(x_2^2+2 x_4^2+2 x_6^2-1\right) x_3^2+4 x_2 x_5 x_6 x_3+2 x_4^4+2 x_6^4+2 x_5^2 x_6^2-\\
& &2 x_6^2+x_4^2 \left(4 x_6^2-2\right)+1\\
I_3&=&x_1^4+2 \left(x_2^2+x_4^2\right) x_1^2+4 x_2 x_4 x_6 x_1+x_2^4+x_3^4+x_4^4+x_5^4+x_6^4+2 x_3^2 x_5^2+\\
& &2 x_3^2 x_6^2+2 x_4^2 x_6^2+2 x_5^2 x_6^2+2 x_2^2 \left(x_3^2+x_5^2+x_6^2\right)\\
I_4&=&2 x_4^4+\left(4 x_5^2+4 x_6^2-2\right) x_4^2+2 x_5^4+2 x_6^4+2 x_3^2 x_6^2-2 x_6^2+4 x_2 x_3 x_5 x_6+\\
& &2 x_5^2 \left(x_2^2+2 x_6^2-1\right)+1\\
I_0&=&(x_2 x_4-x_1 x_6)^2+4 x_1 x_3 x_4 x_5 
\end{eqnarray*}

\end{thm}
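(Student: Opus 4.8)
The plan is to recognize the four functions $I_0, I_2, I_3, I_4$ as restrictions to $M$ of classical local--unitary invariants of three qubits, so that their constancy along the flow becomes a consequence of Theorem \ref{X} rather than a brute--force differentiation. Recall that $I$ is a first integral of $X$ exactly when it is constant on every integral curve $\ket{\phi(t)}$. By Theorem \ref{X}, any two points on such a curve differ by a local orthogonal gate $R_y(\theta_2)\otimes R_y(\theta_1)\otimes R_y(\theta_0)$. Hence \emph{any} function on $M$ invariant under the group $SO(2)\otimes SO(2)\otimes SO(2)$ generated by the $R_y$ gates is automatically a first integral of $X$, and the whole problem reduces to exhibiting each $I_k$ as such an invariant.

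First I would treat $I_2, I_3, I_4$. Writing $\ket{\phi}=\sum a_{rst}\ket{rst}$ on $M$ (so $a_{001}=a_{100}=0$, $a_{000}=x_1$, $a_{010}=x_2$, $a_{011}=x_3$, $a_{101}=x_4$, $a_{110}=x_5$, $a_{111}=x_6$), I would compute the three single--qubit reduced density matrices $\rho_1,\rho_2,\rho_3$ and their purities $\tr(\rho_k^2)$. A short computation gives, e.g., $\rho_2$ with diagonal $(x_1^2+x_4^2,\ x_2^2+x_3^2+x_5^2+x_6^2)$ and off--diagonal entry $x_1x_2+x_4x_6$, and expanding $\tr(\rho_2^2)=(\rho_2)_{00}^2+(\rho_2)_{11}^2+2(\rho_2)_{01}^2$ reproduces $I_3$ exactly. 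The same computation with $\rho_3$ and $\rho_1$ matches $I_2$ and $I_4$ after using $x_1^2+\cdots+x_6^2=1$ to eliminate the $x_1^2$ terms (legitimate because $X$ is tangent to $M$ and so preserves this constraint); this is precisely the origin of the stray constants $-1,-2,+1$ in $I_2$ and $I_4$. Since the eigenvalues of each $\rho_k$ are unchanged when $\ket{\phi}$ is acted on by a local orthogonal (indeed local unitary) gate, the purities $\tr(\rho_k^2)=\sum_i\lambda_i^2$ are $SO(2)^{\otimes 3}$--invariant, hence first integrals.

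Finally I would identify $I_0$ as the Cayley hyperdeterminant of the $2\times2\times2$ amplitude array restricted to $M$. Setting $a_{001}=a_{100}=0$ in the standard degree--four formula collapses it to $(a_{000}a_{111}-a_{010}a_{101})^2+4a_{000}a_{011}a_{101}a_{110}=(x_1x_6-x_2x_4)^2+4x_1x_3x_4x_5$, which is exactly $I_0$. Because the hyperdeterminant scales by $(\det A)^2(\det B)^2(\det C)^2$ under $A\otimes B\otimes C$ and each $R_y(\theta)$ has determinant $\cos^2(\theta/2)+\sin^2(\theta/2)=1$, it is genuinely invariant under $R_y(\theta_2)\otimes R_y(\theta_1)\otimes R_y(\theta_0)$, so it too is a first integral. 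I expect the main obstacle to be purely bookkeeping, namely tracking which monomials are absorbed via $\sum x_j^2=1$ in $I_2$ and $I_4$, rather than anything conceptual. As a self--contained alternative and a useful check, one can instead verify directly that each polynomial $X(I_k)=\sum_{j}\partial_{x_j}I_k\, X_j$ reduces to a multiple of $x_1^2+\cdots+x_6^2-1$ and hence vanishes on $M$; this is routine but, given that each $X(I_k)$ is a degree--six polynomial in six variables, best delegated to a computer algebra system.
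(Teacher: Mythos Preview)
Your proposal is correct and in fact supplies what the paper omits: the paper states this theorem without proof, only remarking afterward that $I_2,I_3,I_4$ correspond to $\tr(\rho_C^2),\tr(\rho_B^2),\tr(\rho_A^2)$ in Sudbery's notation. Your identification of $I_0$ with the restricted Cayley hyperdeterminant, and your argument that any $SO(2)^{\otimes 3}$--invariant is automatically a first integral via Theorem~\ref{X}, go beyond what the paper provides while remaining fully consistent with its stated interpretation of $I_2,I_3,I_4$.
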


We decided to keep the notation used in \cite{S} for the invariants $I_2$, $I_3$ and $I_4$ where it is explained how these first  integrals correspond to  $tr(\rho^2_C)$, $tr(\rho^2_B)$ and $tr(\rho^2_A)$ respectively. 

We finish this section by showing the equilibrium points of the vector field $X$.

\begin{thm} \label{equili} The equilibrium point of the vector field $X$ is the union of three 3-dimensional spheres, and a 1-dimensional manifold $P$. More precisely, they are $S_1^3\cup S_2^2\cup S_3^3\cup P$ where
\begin{eqnarray*} S_1^3&=&\{x\in M: x_1=x_4, x_3=-x_5\} \\
  S_2^3&=&\{x\in M: x_1=-x_4, x_3=x_5\} \\
S_3^3&=&\{x\in M: x_1=x_4=0\}\\
 P&=&\{x\in M: -2 x_1^3+x_1+2 x_3 x_4 x_5=2 x_4^3-x_4-2 x_1 x_3 x_5=x_2=x_6=0 \}
\end{eqnarray*}
\end{thm}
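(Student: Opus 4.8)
The plan is to describe the equilibrium set as the common zero locus of the six component functions $X_1,\dots,X_6$ restricted to the constraint surface $M=\{x_1^2+\dots+x_6^2=1\}$, and to organize the whole argument around the factor shared by $X_1$ and $X_4$. First I would observe that $X_1=x_2\left(x_1^2-x_4^2\right)$ and $X_4=\left(x_1^2-x_4^2\right)x_6$ both carry the common factor $x_1^2-x_4^2$. Hence every equilibrium must satisfy either $x_1^2=x_4^2$ or $x_2=x_6=0$, and these two alternatives give an exhaustive case split for the forward inclusion.

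In the first case I would split $x_1^2=x_4^2$ into the two affine pieces $x_4=x_1$ and $x_4=-x_1$. Substituting $x_4=x_1$ into the remaining components and then using the sphere relation to replace $x_2^2+x_6^2-1$ by $-2x_1^2-x_3^2-x_5^2$, I expect the four surviving components to collapse to $X_2=x_1(x_3+x_5)^2$, $X_6=-x_1(x_3+x_5)^2$ and $X_3=X_5=x_1(x_3+x_5)(x_6-x_2)$. Requiring these to vanish forces $x_1(x_3+x_5)^2=0$, so either $x_1=0$ (which with $x_4=x_1$ gives $x_1=x_4=0$, a point of $S_3^3$) or $x_3=-x_5$ (which with $x_1=x_4$ gives a point of $S_1^3$). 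The mirror substitution $x_4=-x_1$ should produce $X_2=X_6=x_1(x_3-x_5)^2$ and $X_3=-X_5=-x_1(x_3-x_5)(x_6+x_2)$, yielding $S_3^3$ and $S_2^3$ by the identical reasoning.

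In the second case, $x_2=x_6=0$, I would note that $X_1,X_3,X_4,X_5$ vanish identically, so the only surviving conditions are $X_2=0$ and $X_6=0$. Using $x_2=x_6=0$ in the constraint to replace $x_3^2+x_4^2+x_5^2$ by $1-x_1^2$, these two equations become exactly $-2x_1^3+x_1+2x_3x_4x_5=0$ and $2x_4^3-x_4-2x_1x_3x_5=0$, which are the defining relations of $P$. Conversely, I would confirm the reverse inclusion by direct substitution that each of $S_1^3$, $S_2^3$, $S_3^3$, $P$ consists of equilibria; for $S_3^3$ this is immediate since $x_1=x_4=0$ annihilates every term, and for the other three pieces it is exactly the reversal of the simplifications above. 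A dimension count then completes the picture: each $S_i^3$ is cut out of the $5$-sphere $M$ by two independent linear equations and is therefore a great $3$-sphere, while $P$ is the intersection of the $3$-sphere $\{x_2=x_6=0\}\cap M$ with two cubic hypersurfaces, hence generically a $1$-dimensional manifold.

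The step I expect to be the main obstacle is the algebraic collapse in the first case: the components $X_2$ and $X_6$ do \emph{not} factor as perfect squares in the full polynomial ring, and the clean forms $\pm x_1(x_3\pm x_5)^2$ appear only after the ambient relation $\sum x_i^2=1$ is substituted. Recognizing that the vanishing of $X$ must be analyzed on $M$ rather than in all of $\mathbb{R}^6$ is precisely what makes the factorization—and therefore the identification of $S_1^3$ and $S_2^3$—possible.
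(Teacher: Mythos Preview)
Your argument is correct. The paper states this theorem without proof (it is presented as the outcome of a direct computation), so there is no proof in the paper to compare against; your proposal supplies exactly the verification the paper omits.

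Your organization around the common factor $x_1^2-x_4^2$ in $X_1$ and $X_4$ is the natural one, and your algebraic reductions check out: with $x_4=x_1$ and the sphere relation one indeed gets $X_2=-X_6=x_1(x_3+x_5)^2$ and $X_3=X_5=x_1(x_3+x_5)(x_6-x_2)$, and the mirror case $x_4=-x_1$ behaves as you claim. The only spot that is slightly loose is the phrase ``hence generically a $1$-dimensional manifold'' for $P$: to match the statement literally you would want to verify that the two cubic relations cut the $3$-sphere $\{x_2=x_6=0\}\cap M$ transversally (i.e.\ that the gradients of $-2x_1^3+x_1+2x_3x_4x_5$ and $2x_4^3-x_4-2x_1x_3x_5$ are independent along $P$). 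This is a routine Jacobian check, and the paper itself simply asserts the dimension, so your level of detail is consistent with the source.
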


\section{Representation in a 4-dimensional space}

In this section we show that up to a local orthogomal gate, every real 3-qubit state can be written in the form
$\ket{\phi(t)}=x_1 \ket{000}+x_3 \ket{011}+x_4\ket{101}+x_5\ket{110}+x_6\ket{111}$. To prove this result we will be using the vector field $X$ defined in the previous section. We will first show that the result is true for the equilibrium points and then we will show the case for all other points in the space $M$ defined in Remark \ref{mm}.

\begin{thm} \label{thm2} Let $\ket{\phi}=u_0 \ket{000}+u_1 \ket{001}+u_2 \ket{010}+u_3 \ket{011}+u_4 \ket{100}+u_5 \ket{101}+u_6 \ket{110}+u_7 \ket{111}$, with $u_i$ real numbers. There exist $\theta_0$, $\theta_1$   and $\theta_2$ such that 

$$R_y(\theta_2)\otimes R_y(\theta_1 )\otimes R_y(\theta_0) \ket{\phi}=x_1 \ket{000} +x_3 \ket{011}+x_4\ket{101}+x_5\ket{110}+x_6\ket{111}$$

\end{thm}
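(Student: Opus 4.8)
The plan is to use the vector field $X$ of Theorem \ref{X} as a steering device. By Theorem \ref{thm1} we may assume from the outset that $\ket{\phi}$ lies in $M$, i.e.\ that its $\ket{001}$ and $\ket{100}$ amplitudes already vanish, so the only remaining coefficient to kill is $x_2$, the amplitude of $\ket{010}$. Since every integral curve of $X$ stays in $M$ and consists of states mutually related by local orthogonal gates (Theorem \ref{X}), and since a point of $M$ with $x_2=0$ is precisely a state of the asserted form, it suffices to prove that the orbit of $X$ through an arbitrary point of $M$ meets the hypersurface $\{x_2=0\}$. The argument then splits, as announced before the statement, into the equilibrium set and its complement.

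First I would dispose of the equilibrium set $S_1^3\cup S_2^3\cup S_3^3\cup P$ of Theorem \ref{equili}, where the flow is stationary and gates must be exhibited directly. On $P$ one already has $x_2=0$, so nothing is to be done. On $S_3^3$, where $x_1=x_4=0$, the four surviving terms all carry a $1$ in the second qubit, so the state factors as $\ket{1}$ on the middle qubit tensored with a two--qubit state on the outer qubits; a single $R_y$ on the third qubit, chosen so that $\cos(\theta_0/2)\,x_2-\sin(\theta_0/2)\,x_3=0$ (equivalently, the real two--qubit Schmidt form of \cite{P} followed by an $X$ on the middle qubit), brings it into the desired form. The genuinely delicate equilibria are $S_1^3$ and $S_2^3$, whose members are typically fully tripartite entangled and admit no bipartite splitting; for these I would argue by continuity and entanglement invariance, perturbing such a point by a local orthogonal gate and re--applying Theorem \ref{thm1} to re--enter $M$ at a \emph{non}--equilibrium point carrying the same real entanglement, to which the generic case below then applies.

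For a non-equilibrium point the plan is to flow along $X$ and show that $x_2$ must vanish somewhere on the resulting orbit. Evaluating $\dot x_2=X_2=-x_1^3+(x_3^2+x_4^2+x_5^2)x_1+2x_3x_4x_5$ on the locus $\{x_2=0\}$ shows it is not identically zero there, so $\{x_2=0\}$ is not flow--invariant and the orbit may cross it transversally. Using the compactness of $M$, the nonempty forward and backward limit sets of the orbit, and the first integrals $I_0,I_2,I_3,I_4$, which confine the whole orbit closure to one common level set, I would show that the orbit cannot remain in $\{x_2>0\}$ (or $\{x_2<0\}$) for all time. Concretely, I would minimize $x_2^2$ over the connected component of that level set containing the orbit: at a minimizer $2x_2X_2=0$, so either $x_2=0$, as desired, or $X_2=0$, and in the latter case I would use the remaining first integrals together with the equilibrium classification of Theorem \ref{equili} to identify the minimizer as an equilibrium lying in the orbit's closure, reducing this configuration to the case already handled.

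The hard part is exactly this crossing step. The naive argument -- that every orbit limits onto an equilibrium and that equilibria have $x_2=0$ -- fails, because $S_1^3$, $S_2^3$ and $S_3^3$ all contain equilibria with $x_2\neq0$, so an orbit could in principle approach such a point from one side of $\{x_2=0\}$ without ever meeting it. Ruling this out is precisely where the real (as opposed to complex) nature of the amplitudes must be controlled: one has to use the first integrals to guarantee that the relevant level--set equations actually admit a real solution with $x_2=0$ on the correct orbit, rather than merely on a nearby one. This dynamical argument, together with the treatment of the tripartite equilibria $S_1^3$ and $S_2^3$, is the technical core of the theorem.
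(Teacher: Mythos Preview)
Your overall architecture matches the paper's: reduce to $M$ via Theorem~\ref{thm1}, treat the equilibria of $X$ separately, and for non-equilibria flow along $X$. The cases $P$ and $S_3^3$ are fine. But two steps are genuine gaps, and the paper fills them very differently from what you sketch.

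First, your treatment of $S_1^3$ and $S_2^3$ by ``perturb to a non-equilibrium point and invoke the generic case'' is circular: your generic argument itself terminates (in the $X_2=0$ branch) by landing on an equilibrium in the orbit closure and appealing to ``the case already handled'', which for $S_1^3,S_2^3$ sends you right back to the generic case. The paper breaks this loop by disposing of $S_1^3$ and $S_2^3$ \emph{directly}. For $\ket\phi\in S_1^3$ (so $w_1=w_4$, $w_5=-w_3$) one applies the ansatz $R_y(\theta_2)\otimes R_y(\theta_1)\otimes R_y(-\theta_2)$; the $\ket{001}$ and $\ket{100}$ amplitudes then come out equal and opposite, a substitution for $\theta_1$ in terms of $\theta_2$ kills both at once, and the surviving $\ket{010}$ amplitude becomes an explicit function $f(\theta_2)$ with $f(0)+f(\tfrac{\pi}{2})+f(\pi)+f(\tfrac{3\pi}{2})=0$, so the intermediate value theorem furnishes a root. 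For $S_2^3$ the same works with $R_y(\theta_2)\otimes R_y(\theta_1)\otimes R_y(+\theta_2)$. This concrete computation is the ``technical core'' you were anticipating, and it is not replaced by a soft continuity argument.

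Second, in the non-equilibrium case your minimization argument is incomplete: from $x_2X_2=0$ at a minimizer, the branch $X_2=0$ is a single scalar condition and does \emph{not} force the point to be an equilibrium of $X$, so ``$X_2=0\Rightarrow$ equilibrium'' is unjustified; and in the branch $x_2=0$ the minimizer lies on the level-set component, not a priori on the orbit or its closure. The paper bypasses all of this with a much more elementary observation. Since $X_1=x_2(x_1^2-x_4^2)$ and $X_4=x_6(x_1^2-x_4^2)$, the condition $x_1^2\ne x_4^2$ persists along the flow. If it holds initially, then either the orbit is periodic, in which case $x_1(t)$ has a critical point $t_0$ and $x_1'(t_0)=x_2(t_0)\bigl(x_1^2(t_0)-x_4^2(t_0)\bigr)=0$ forces $x_2(t_0)=0$ on the orbit itself; or the orbit is not periodic, hence accumulates on an equilibrium, and by compactness of $O(2)^{\otimes 3}$ that equilibrium is locally orthogonally equivalent to $\ket\phi$, so the (now independently established) equilibrium case finishes. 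In the borderline cases $w_1=\pm w_4$ one checks that $x_1,x_4$ stay constant and $x_2'$ has a fixed sign, so $x_2$ is monotone, the orbit is not periodic, and again one limits onto an equilibrium.
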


\begin{proof}
By Theorem \ref{thm1} we can assume that $\ket{\phi}=w_1 \ket{000}+w_2 \ket{010} +w_3 \ket{011}+w_4\ket{101}+w_5\ket{110}+w_6\ket{111}$. Let us consider first the case that $\ket{\phi}$ is an equilibrium point of the vector field $X$. By theorem \ref{equili} we have that $\ket{\phi}$ is in either $S_1^3$, $S_2^3$, $S_3^3$, or $P$. If $\ket{\phi}\in P$ then $w_2=0$ and the theorem is trivially  true. Let us now assume that $\ket{\phi}\in S_3^3$. Then we have that $w_1=w_4=0$ and a direct computation shows that if we pick $\theta_0=\theta_1=0$ and $\theta_2$ an angle that satisfies $w_2 \cos \left(\frac{\theta_2}{2}\right)-w_5 \sin \left(\frac{\theta_2}{2}\right)=0$, then we have that $R_y(\theta_2)\otimes R_y(\theta_1 )\otimes R_y(\theta_0) \ket{\phi}$ has the desired form. Let us assume now that $\ket{\phi}$ is in $S_1^3$. Then we have that $w_1=w_4$ and $w_5=-w_3$. 
A direct computation shows that 

\begin{eqnarray*}
R_y(\theta_2)\otimes R_y(\theta_1 )\otimes R_y(-\theta_2) \ket{\phi}&=&x_1 \ket{000}+z_1 \ket{001}+x_2 \ket{010} +x_3 \ket{011}-z_1 \ket{100}\\
& &+x_4\ket{101}+x_5\ket{110}+x_6\ket{111}
\end{eqnarray*}

with 

\begin{eqnarray}\label{eq1}
z_1=\frac{1}{2} (w_2 \sin \theta_2+w_6 \sin\theta_2-2 w_3 \cos \theta_2)\sin \frac{\theta_1}{2} -w_1 \sin \theta_2\cos \frac{\theta_1}{2}
\end{eqnarray}

and
\begin{eqnarray}\label{eq2}
x_2=w_1 \sin\frac{\theta_1}{2} \cos \theta_2+\frac{1}{2} \cos \frac{\theta_1}{2} (2 w_3 \sin \theta_2+w_2 \cos \theta_2+w_6 \cos \theta_2+w_2-w_6)
\end{eqnarray}

We need to find $\theta_1$ and $\theta_2$ such that $x_2$ and $z_1$ are equal to zero. Let us make

\begin{eqnarray}\label{rem}
\sin \frac{\theta_1}{2}=\lambda w_1 \sin \theta_2\hbox{ and } \cos \frac{\theta_1}{2}=\frac{\lambda}{2} (w_2 \sin \theta_2+w_6 \sin\theta_2-2 w_3 \cos \theta_2)
\end{eqnarray}

With this substitution $z_1=0$ and $x_2=\lambda f$ with 

$$f=w_1^2 \sin \theta_2 \cos \theta_2+\frac{1}{4} (2 w_3 \sin \theta_2+(w_2+w_6) \cos \theta_2+w_2-w_6) ((w_2+w_6) \sin \theta_2-2 w_3 \cos \theta_2)$$

A direct computation shows that 

$$f(0)+f(\frac{\pi }{2})+f(\pi)+f(\frac{3\pi }{2})=0$$

Therefore by the intermediate value theorem we have that for some angle $\theta_2$ between $0$ and $\frac{3\pi}{2}$ we have that $f(\theta_2)=0$. Once we have $\theta_2$ is easy to find $\theta_1$ and $\lambda$ that satisfy Equation \ref{rem}. Therefore the result is also true for points in $S_1^3$. For points in $S_2^3$ the proof is similar but in this case we use the fact that since $w_4=-w_1$ and $w_5=w_3$, then a direct computation shows that

\begin{eqnarray*}
R_y(\theta_2)\otimes R_y(\theta_1 )\otimes R_y(\theta_2) \ket{\phi}&=&x_1 \ket{000}+z_1 \ket{001}+x_2 \ket{010} +x_3 \ket{011}+z_1 \ket{100}\\
& &+x_4\ket{101}+x_5\ket{110}+x_6\ket{111}
\end{eqnarray*}

Finally, let us consider the case where $\ket{\phi}$ is not an equilibrium point. Assume

$$\ket{\xi(t)}=x_1(t) \ket{000}+x_2(t) \ket{010} +x_3(t) \ket{011}(t)+x_4(t)\ket{101}+x_5(t)\ket{110}+x_6(t)\ket{111}$$

is the integral curve of the vector field $X$ satisfying $\ket{\xi(0)}=\ket{\phi}$. We will be using the fact that every integral curve of the vector field $X$ is either topologically a circle or it  contains a sequence of points that converges to an equilibrium point. Let us consider the case that $w_1=w_4$. Since 
\begin{eqnarray}\label{dex1x4}
x_1^\prime(t)=x_2(t) (x_1^2(t)-x_4^2(t))\hbox{ and } x_4^\prime(t)=x_6(t) (x_1^2(t)-x_4^2(t))
\end{eqnarray}

Then, by the existence and uniqueness theorem of differential equations we have that $x_1(t)=w_1$ and $x_4(t)=w_4$ for all $t$. Therefore 

$$x_2^\prime(t)=w_1 (x_3(t)+x_5(t))^2$$

Since we are assuming that $\ket{\phi}$ is not an equilibrium point then $w_1=w_4\ne0$ and $w_3\ne -w_5$. Therefore $x_2(t)$ is a non-constant function that satisfies $x_2^\prime(t)\ge 0$ or $x_2^\prime(t)\le 0$. In particular $x_2(t)$ cannot be a periodic solution and, as a consequence,  the solution $\ket{\phi(t)}$ is not periodic. Therefore there exists  a sequence of points $t_1,t_2,\dots$ such that $\ket{\xi(t_i)}$ converges to an equilibrium point $\ket{\phi_0}$. By continuity we have that $\ket{\phi}$ differ from $\ket{\phi_0}$ by a local orthogonal gate. Since we already have shown that the theorem holds true for the equilibrium points of $X$, then the result follows in this case. The proof is similar if we assume that $w_4=-w_1$. In the case that neither $w_4=w_1$ nor $w_4(t)=-w_1$, then, since $x_1(t)$ and $x_4(t)$ satisfies the system \ref{dex1x4}, then we have that $x^2_1(t)\ne x^2_4(t)$ for all $t$. Recall that if $x_1(t)$ is not a periodic function then the result follows due to the fact that the integral curve will be arbitrarily close to an equilibrium point. Under the assumption that $x_1(t)$ is a periodic function we have that $x_1^\prime(t_0)$ must be zero for some $t_0$. Since $ x_1^\prime(t_0)=x_2(t_0) (x_1^2(t_0)-x_4^2(t_0))$ then we must have that $x_2(t_0)=0$ and therefore the theorem follows also in this case.

\end{proof}

\end{document}